\newtheorem{theorem}{Theorem}
\newtheorem{lemma}[theorem]{Lemma}
\newcommand{\qed}{\rule{7pt}{7pt}}
\newenvironment{proof}{\noindent {\it Proof\/}:}{\qed \medskip}
\newcommand{\argmin}{\hbox{argmin\,}}
\begin{document}

\bibliographystyle{abbrv}

\title{Recursive Random Contraction Revisited}

\author{David R.\ Karger\thanks{Address: MIT, Computer Science and Artificial Intelligence Laboratory, The Stata Center, Room G592, 32 Vassar St., Cambridge, MA 02139, USA. {\tt karger@mit.edu}}\\MIT \and
	David P.\ Williamson\thanks{Address: School
		of Operations Research and Information Engineering, Cornell
		University, Ithaca, NY 14853, USA. {\tt davidpwilliamson@cornell.edu}}\\Cornell University}


\date{}
\maketitle


\begin{abstract}
	
In this note, we revisit the recursive random contraction algorithm of Karger and Stein \cite{KargerS96} for finding a minimum cut in a graph.  Our revisit is occasioned by a paper of Fox, Panigrahi, and Zhang \cite{FoxPZ19} which gives an extension of the Karger-Stein algorithm to minimum cuts and minimum $k$-cuts in hypergraphs.  When specialized to the case of graphs, the algorithm is somewhat different than the original Karger-Stein algorithm.  We show that the analysis becomes particularly clean in this case: we can prove that the probability that a fixed minimum cut in an $n$ node graph is returned by the algorithm is bounded below by $1/(2H_n-2)$, where $H_n$ is the $n$th harmonic number.  We also consider other similar variants of the algorithm, and show that no such algorithm can achieve an asymptotically better probability of finding a fixed minimum cut.
\end{abstract}

\section{Introduction}
In the global minimum cut problem (or simply the minimum cut problem), we are given an undirected graph $G=(V,E)$, with a capacity $u(i,j) \geq 0$ for each edge $(i,j) \in E$.  The goal is to find a nontrivial set $S \subset V$, $S \neq \emptyset$, that minimizes the capacity of the cut $S$: the capacity is the sum of the capacities of the edges with exactly one endpoint in $S$.  Let $\delta(S)$ be the set of edges with exactly one endpoint in $S$.  We denote the capacity of the cut $S$ by $u(\delta(S))$.  We wish to find a nontrivial $S$ that minimizes $u(\delta(S))$.

It has long been known that it is possible to find a minimum cut via $n-1$ maximum flow computations (pick an arbitrary source $s$, replace each undirected edge $(i,j)$ with two arcs $(i,j)$ and $(j,i)$ of capacity $u(i,j)$, for all other $t \neq s$ find the minimum $s$-$t$ cut, and return the smallest cut found).  In the 1990s, a number of different, non-flow-based algorithms were proposed to find the minimum cut, including ones proposed by Gabow \cite{Gabow95} and Nagamochi and Ibaraki \cite{NagamochiI92}.  Karger \cite{Karger93} proposed a particularly simple randomized algorithm: pick a random edge with probability proportional to its capacity, and {\em contract} the edge; that is, identify its two endpoints.  The basic idea is that an edge chosen with probability proportional to its capacity is unlikely to be in $\delta(S^*)$ for a given minimum cut $S^*$.  We say that the cut $S^*$ {\em survives} the contraction if the contracted edge is not in $\delta(S^*)$.  In particular, Karger shows that the probability that a given minimum cut $S^*$ survives the contraction for an edge chosen with probability proportional to its capacity is at least $1- \frac{2}{n}$ for an $n$-node graph.  The run time of Karger's algorithm is $O(m)$; it finds a given minimum cut with probability at least $1/{n \choose 2}$.  By running the algorithm many times, the algorithm can find a given minimum cut with high probability in $O(n^4 \log n)$ time.

Karger and Stein \cite{KargerS96} improve Karger's original contraction algorithm by showing that if one contracts an $n$ node graph to have $n/\sqrt{2}$ nodes, the probability that a given minimum cut survives is at least 1/2.  Thus they do this twice, call their  algorithm recursively, and return the better of the two cuts found. Then we expect in at least one of the two recursive calls, the given minimum cut survives. They prove that the probability that this algorithm returns a given minimum cut is $\Omega(1/\log n)$, much better than the original algorithm.  The algorithm runs in $O(n^2 \log n)$ time.  By running the algorithm $O(\log^2 n)$ times, it can return a given minimum cut with high probability.

Recently, Fox, Panigrahi, and Zhang \cite{FoxPZ19} proposed an extension of the Karger-Stein recursive contraction algorithm to finding minimum cuts in hypergraphs.  When viewed as a recursive contraction algorithm on graphs, the Fox, Panigrahi, and Zhang version of the algorithm is somewhat different than the original Karger-Stein algorithm.  
One can give two different perspectives on the algorithm.  The first is that 
on an $n$ node graph the algorithm draws a number of recursive calls $k$ from a geometric distribution with probability $p_n = 1 - \frac{2}{n}$, so that the probability of $k$ recursive calls is $p_n(1-p_n)^{k-1}$. For each call, the algorithm picks a random edge to contract, contracts it, then recursively calls itself on the contracted graph.
Alternatively, the algorithm picks a random edge, contracts it, and calls itself on the smaller graph; let $S_1$ be the minimum cut found on this call. With probability $1-p_n$ it additionally calls itself on the same graph, resulting in cut $S_2$, and returns the smaller of cuts $S_1$ and $S_2$; otherwise it returns $S_1$. We can view this second algorithm as flipping a coin repeatedly until we obtain a heads, with the probability of heads being $p_n$:
for each coin flip we contract an edge and perform a recursive call. We will perform $k$ calls with probability $(1-p_n)^{k-1}p_n$, just as in the first version of the algorithm. We present the two different perspectives in Algorithms \ref{alg:fpz1} and \ref{alg:fpz2}.  
We call this algorithm the FPZ recursive contraction algorithm, or just the FPZ algorithm, and we summarize the two variants in Algorithm \ref{alg:fpz1} and Algorithm \ref{alg:fpz2}.

\begin{algorithm2e}[t]
	\TitleOfAlgo{FPZRecursiveContraction($G$,$n$)}
	\eIf{$n = 2$}{
		\Return one of the two nodes\;
	}{
		Pick $k$ with probability $p_n(1-p_n)^{k-1}$\;
		\For{$i \gets$ 1 to $k$}{
			Let $G_i$ be result of contracting one edge chosen with probability proportional to capacity\;
			$S_i \gets$ \ProcNameSty{FPZRecursiveContraction}($G_i$, $n-1$)\;
		}
		$j \gets \argmin_{j=1,\ldots,k} u(\delta(S_j))$\;
		\Return($S_j$)
	}
\caption{The Fox-Panigrahi-Zhang recursive contraction algorithm.}
\label{alg:fpz1}
\end{algorithm2e}

\begin{algorithm2e}[t]
	\TitleOfAlgo{FPZRecursiveContraction($G$,$n$)}
	\eIf{$n = 2$}{
		\Return one of the two nodes\;
	}{
		Pick random number $r$ uniformly in $[0,1]$\;
		Let $G'$ be result of contracting one edge chosen with probability proportional to capacity\;
		$S_1 \gets $\ProcNameSty{FPZRecursiveContraction}($G'$,$n-1$)\;
		\eIf{$r \leq p_n$}{
			\Return($S_1$)\;
		}{
			$S_2 \gets $\ProcNameSty{FPZRecursiveContraction}($G$,$n$)\;
			$j \gets \argmin_{j=1,2} u(\delta(S_j))$\;
			\Return($S_j$)
		}
	}
	\caption{The Fox-Panigrahi-Zhang recursive contraction algorithm (version 2).}
	\label{alg:fpz2}
\end{algorithm2e}

We show that the analysis becomes particularly clean for the FPZ algorithm applied to graphs: we can provide an exact expression that lower bounds the probability that a given minimum cut is returned.  In particular, we show that the probability that a given minimum cut is returned for an $n$-node graph is bounded below by $1/(2H_n-2)$, where $H_n$ is the $n$th harmonic number; that is, $H_n = 1 + \frac{1}{2} + \frac{1}{3} + \cdots + \frac{1}{n}$.  The lower bound given by FPZ \cite[Theorem 2.1]{FoxPZ19} for finding a minimum cut in hypergraphs reduces to this quantity, and they note that the bound is tight in the case of cycles in graphs.  $H_n$ is approximately $\ln n$, so that the probability of the algorithm returning a given minimum cut is $\Omega(1/\log n)$, the same as that of the original Karger-Stein algorithm.   Fox et al.\ show that the running time of the algorithm for graphs is $O(n^2 \log n)$, and we give a different (and perhaps simpler) proof.  We note that unlike the Karger-Stein analysis, we do not need to analyze the probability that a given minimum cut survives multiple contractions; it is sufficient to know that the probability that it survives a single contraction is at least $1- \frac{2}{n}$ for an $n$-node graph.

We then take the occasion to consider other variants of the recursive contraction algorithm.  In particular, we show that for a class of recursive contraction algorithms, we may as well assume that the algorithm performs just one contraction and then makes one recursive call with probability $(n-4)/(n-2)$ and two recursive calls with probability $2/(n-2)$.  This algorithm also succeeds in finding a given minimum cut with probability $\Omega(1/\log n)$ and has running time $O(n^2 \log n)$.  In fact, we show that any algorithm from this class that runs in $O(n^2\log n)$ time has success probability $\Theta(1/\log n)$, so that no such algorithm can achieve an asymptotically better probability of finding a fixed minimum cut.

\section{The FPZ Algorithm}

In order to make our presentation self-contained, we show the lemma from Karger \cite{Karger93} that the probability of a given minimum cut surviving a random contraction is at least $p_n \equiv 1 - \frac{2}{n}$.

\begin{lemma}[Karger \cite{Karger93}]
Let $S^*$ be a given minimum cut, and let $\lambda^*  = u(\delta(S^*))$ be the capacity of the cut.  If we choose an edge $(i,j)$ at random with probability proportional to its capacity $u(i,j)$, then the probability that $(i,j) \notin \delta(S^*)$ is at least $p_n$.
\end{lemma}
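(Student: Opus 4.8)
The plan is to bound the total edge capacity $u(E)$ from below in terms of $\lambda^*$ and $n$, and then observe that the probability a randomly chosen edge lies in $\delta(S^*)$ is exactly $\lambda^*/u(E)$. First I would note that for \emph{every} vertex $v \in V$, the singleton set $\{v\}$ is a nontrivial cut, so by minimality of $S^*$ we have $u(\delta(\{v\})) \geq \lambda^*$. In other words, the weighted degree of each vertex is at least $\lambda^*$.

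Next I would sum this inequality over all $n$ vertices. Since each edge $(i,j)$ contributes its capacity $u(i,j)$ to exactly two of the terms $u(\delta(\{v\}))$ (namely $v=i$ and $v=j$), the sum telescopes to $\sum_{v \in V} u(\delta(\{v\})) = 2\,u(E)$. Combining with the per-vertex bound gives $2\,u(E) \geq n\lambda^*$, i.e. $u(E) \geq n\lambda^*/2$.

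Finally, since we pick the edge $(i,j)$ with probability $u(i,j)/u(E)$, the probability that the chosen edge belongs to $\delta(S^*)$ is $\sum_{(i,j) \in \delta(S^*)} u(i,j)/u(E) = u(\delta(S^*))/u(E) = \lambda^*/u(E) \leq \lambda^*/(n\lambda^*/2) = 2/n$. Hence the probability that $(i,j) \notin \delta(S^*)$ is at least $1 - 2/n = p_n$, as claimed.

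There is no real obstacle here; the only things to keep in mind are the harmless edge cases (if $\lambda^* = 0$ the statement is vacuous or trivial, and we may assume the graph is connected so that $u(E) > 0$ and the probabilities are well defined), and the observation that the argument uses only the minimality of $S^*$ against singleton cuts, not its full structure. This is exactly what makes the bound $p_n$ clean and uniform in $n$.
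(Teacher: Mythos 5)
Your proposal is correct and follows the paper's argument essentially verbatim: it uses the singleton-cut bound $u(\delta(\{v\})) \geq \lambda^*$, sums over vertices to get $u(E) \geq n\lambda^*/2$, and concludes the edge lies in $\delta(S^*)$ with probability at most $2/n$. The brief remarks about edge cases are harmless additions but do not change the approach.
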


\begin{proof}
Let $U$ be the total capacity of all edges in the graph, so that $U=\sum_{(i,j) \in E} u(i,j)$.  The probability we choose a given edge $(i,j)$ is $u(i,j)/U$, and the probability edge $(i,j) \notin \delta(S^*)$ is $$\frac{U - \lambda^*}{U} = 1 - \frac{\lambda^*}{U}.$$  We observe that $u(\delta(i)) \geq \lambda^*$ for any $i \in V$ since $\lambda^*$ is the capacity of the minimum cut.  Then $$U = \frac{1}{2} \sum_{i \in V} u(\delta(i)) \geq \frac{n}{2} \lambda^*.$$  Thus the probability is at least
$$1 - \frac{\lambda^*}{n\lambda^*/2} = 1 - \frac{2}{n} = p_n.$$
\end{proof}

Initially, the choice of recursive calls for the FPZ algorithm seems perplexing.  We show below that the expected number of calls in which a given minimum cut survives is one.  We justify why we need the algorithm to have at least one recursive call in which the given minimum cut survives via {\em branching processes}. This is a model of a tree in which each parent node gives birth some number of child nodes, where the number is drawn from some distribution.  A key theorem is that when the expected number of child nodes is less than 1, then the branching process rapidly goes {\em extinct}; that is, at some depth of the tree there are no more child nodes.  Conversely, if the expected number of children exceeds 1, then there is a good chance for the tree to have infinite depth and to have a number of children at depth $d$ that is exponential in $d$.

We can consider our recursive process as a branching process where the children of a node are those in which the given minimum cut survives.  If this expected number is below 1, then the recursive calls in which the given cut survives die out quickly, so we have little chance of finding the cut at one of the leaves of the recursion.  So we want the expectation to be at least one.  
Similarly (and perhaps a bit surprisingly), we also do not want the expected number of recursive calls in which the given minimum cut survives to be too large.  If it is larger than 1, then at depth $n-1$ (after $n-1$ contractions), the recursion tree will have a number of leaves in which the given cut survives that is exponential in $n$, and thus a total number of leaves in which is exponential in $n$.  This leads to a running time exponential in $n$.

In summary, to achieve a good success probability along with a good running time, we want the expected number of children to be equal to 1, and this is what the FPZ algorithm does. (This is observed in \cite{FoxPZ19} as well, for instance, just before Lemma 2.1).

\begin{lemma}
The expected number of recursive calls of the FPZ algorithm that returns a given minimum cut is at least one.
\end{lemma}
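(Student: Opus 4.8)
The plan is to decouple the two sources of randomness in a single invocation on an $n$-node graph: the number $k$ of recursive calls it spawns, and, for each such call, whether the contracted edge avoids $\delta(S^*)$. I would start by recalling that $k$ is drawn from a geometric distribution with parameter $p_n$, so that $E[k] = 1/p_n = n/(n-2)$. Then, for $i=1,\dots,k$, let $X_i$ be the indicator that the given minimum cut $S^*$ survives the contraction performed to build $G_i$. The preceding lemma gives $E[X_i] \ge p_n$ for each $i$, since each such contraction is a single random contraction of the same $n$-node graph $G$, and these contractions are chosen independently of one another and of $k$.

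The quantity to bound is $N = \sum_{i=1}^{k} X_i$, the number of recursive calls in which $S^*$ survives (equivalently, the calls that can go on to return $S^*$). Conditioning on $k=j$ and using linearity of expectation, $E[N \mid k=j] = \sum_{i=1}^{j} E[X_i] \ge j p_n$. Summing against the geometric law, $E[N] = \sum_{j\ge 1} p_n(1-p_n)^{j-1}\,E[N\mid k=j] \ge p_n^2 \sum_{j\ge 1} j(1-p_n)^{j-1} = p_n^2 \cdot p_n^{-2} = 1$, where I have used the elementary identity $\sum_{j\ge 1} j x^{j-1} = (1-x)^{-2}$ with $x = 1-p_n$. (Equivalently, this is just Wald's identity: $E[N] = E[k]\,E[X_1] \ge p_n^{-1}\cdot p_n = 1$.) I would also remark that if the survival probability were exactly $p_n$, as it is for a cycle, then $E[N]=1$ exactly; this is precisely why the geometric parameter is chosen to be $p_n$, matching the earlier discussion of branching processes.

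There is essentially no hard step here; the only points requiring care are the independence structure — that $k$ is sampled before, and independently of, the edge contractions, so that $N$ is a genuine random sum to which the conditioning (or Wald) argument legitimately applies — and the elementary summation identity, which lets the bound be stated self-contained without invoking Wald's identity by name.
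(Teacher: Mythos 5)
Your argument is correct and is essentially the paper's own proof: condition on the geometric number of calls $k$, use that each call preserves $S^*$ with probability at least $p_n$ so the conditional expectation is at least $kp_n$, and sum against the geometric law using $E[k]=1/p_n$ to get a lower bound of $1$. The indicator/Wald phrasing is just a slightly more formal dressing of the same computation.
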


\begin{proof}
To show this, it helps to use the first perspective on the algorithm. For a geometric distribution, the expected number of calls is $\sum_{k=1}^\infty p_n(1-p_n)^{k-1}k = \frac{1}{p_n}$.  Then we have that the expected number of recursive calls such that the given minimum cut survives is at least
$$\sum_{k=1}^\infty \Pr[k \mbox{ recursive calls}] \cdot  E[\mbox{number calls that cut survives}|k \mbox{ calls}] \geq  \sum_{k=1}^\infty p_n(1-p_n)^{k-1} \cdot k \cdot p_n,$$
using that in each of the $k$ calls, the probability that the cut survives the single random contraction is at least $p_n=1 - \frac{2}{n}$, so that if there are $k$ calls, the expected number of calls in which the cut survives is $kp_n$.  Continuing, and using the fact that the expected number of calls is $1/p_n$, the expected number of calls in which the given minimum cut survives is at least
$$p_n \sum_{k=1}^\infty p_n(1-p_n)^{k-1} \cdot k = p_n \cdot \frac{1}{p_n} = 1.$$
\end{proof}

We now prove our main theorem.  We fix a particular global minimum cut $S^*$.  This result is implied by FPZ \cite[Theorem 2.1]{FoxPZ19}.

\begin{theorem} \label{thm1}
The probability that the given global minimum cut $S^*$ is returned by the FPZ algorithm is bounded below by $$\frac{1}{2H_n-2}.$$
\end{theorem}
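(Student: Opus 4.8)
The plan is to prove, by induction on the number of nodes $n$, the following strengthened claim: for every $n$-node graph and every fixed minimum cut $S^*$ of it, the FPZ algorithm returns $S^*$ with probability at least $q_n := 1/(2H_n-2)$. The base case $n=2$ is immediate: the algorithm returns the unique cut with probability $1$, and $2H_2 - 2 = 1$. For the inductive step I would use the first perspective on the algorithm (Algorithm~\ref{alg:fpz1}), where $k$ branches are drawn from the geometric distribution with parameter $p_n = 1 - 2/n$.

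Consider one such branch. The algorithm first contracts a random edge; by Karger's lemma above, $S^*$ survives this contraction with probability at least $p_n$, and when it survives it remains a minimum cut of the resulting $(n-1)$-node graph, so the recursive call on that graph returns $S^*$ with probability at least $p_n\, q_{n-1}$ by the inductive hypothesis. The edge choices and the recursive calls across the $k$ branches are mutually independent, so conditioned on drawing exactly $k$ branches, the probability that at least one of them returns $S^*$ is at least $1 - (1-p_n q_{n-1})^k$. Averaging over the geometric choice of $k$ gives
$$\Pr[S^*\text{ returned}] \;\ge\; \sum_{k=1}^\infty p_n(1-p_n)^{k-1}\Bigl(1 - (1-p_n q_{n-1})^k\Bigr).$$

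The rest is a routine calculation. The series $\sum_{k\ge 1} p_n(1-p_n)^{k-1}(1-p_n q_{n-1})^k$ is geometric with ratio $(1-p_n)(1-p_n q_{n-1})$ and evaluates to $(1-p_n q_{n-1})/\bigl(1 + q_{n-1}(1-p_n)\bigr)$, so the displayed lower bound simplifies to $q_{n-1}/\bigl(1 + q_{n-1}(1-p_n)\bigr) = q_{n-1}/\bigl(1 + \tfrac{2}{n} q_{n-1}\bigr)$. Taking reciprocals converts this into the clean additive recursion $1/q_n \le 1/q_{n-1} + 2/n$; combined with the inductive hypothesis $1/q_{n-1} \le 2H_{n-1} - 2$ this yields $1/q_n \le 2H_{n-1} - 2 + 2/n = 2H_n - 2$, which closes the induction.

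I do not expect a serious obstacle. The one point needing a little care is that the true survival probability of $S^*$ under a single contraction is only known to be at least $p_n$, not equal to it, while both the geometric weights and the term $1-(1-p_nq_{n-1})^k$ involve $p_n$; replacing the true probability by $p_n$ inside $1-(1-x)^k$ is legitimate because that expression is increasing in $x$, and the weights use $p_n$ simply because that is the algorithm's design parameter. (Alternatively, one can run the same argument from the second perspective, Algorithm~\ref{alg:fpz2}, where one instead solves the inequality $f(n) \ge 1 - (1-p_n f(n-1))(1-(1-p_n)f(n))$ for $f(n)$; this leads to exactly the same recursion $f(n) \ge f(n-1)/(1+\tfrac2n f(n-1))$.)
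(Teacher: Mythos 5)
Your proof is correct, and at its core it establishes the same recursion as the paper, namely that the success probability $P(n)$ satisfies $P(n) \ge P(n-1)/\bigl(1+\tfrac{2}{n}P(n-1)\bigr)$, equivalently $1/P(n) \le 1/P(n-1) + 2/n$, which telescopes to $2H_n-2$. The route to that recursion differs in presentation: the paper works from the second perspective (Algorithm~\ref{alg:fpz2}) and writes a self-referential equation $Q(n) = p_n^2 Q(n-1) + (1-p_n)\bigl(1-(1-Q(n))(1-p_nQ(n-1))\bigr)$, which it solves exactly by taking reciprocals, whereas you work from the first perspective (Algorithm~\ref{alg:fpz1}), condition on the geometric number of branches, sum the series $\sum_k p_n(1-p_n)^{k-1}(1-(1-p_nq_{n-1})^k)$, and close an explicit induction against the guessed value $q_n = 1/(2H_n-2)$. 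What your version buys is that it avoids the fixed-point equation in $Q(n)$ and makes the inequality bookkeeping explicit: you justify replacing the true survival probability by the lower bound $p_n$ via monotonicity of $1-(1-x)^k$, and you note that a surviving $S^*$ remains a minimum cut of the contracted graph so the inductive hypothesis applies -- points the paper's definition of $Q(n)$ (``given that the probability $S^*$ survives \ldots is exactly $p_k$'') handles only implicitly. Two small cosmetic remarks: your phrase ``inductive hypothesis $1/q_{n-1} \le 2H_{n-1}-2$'' is really the identity defining $q_{n-1}$ (the inductive hypothesis is the statement about the algorithm's success probability on $(n-1)$-node graphs), and, like the paper, you identify ``some branch returns $S^*$'' with ``the algorithm returns $S^*$,'' which silently assumes ties among returned minimum cuts are harmless; neither affects correctness relative to the paper's own standard.
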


\begin{proof}
Let $Q(n)$ denote the probability that $S^*$ is returned by the algorithm, given that it has survived contractions to $n$ nodes, and given that the probability $S^*$ survives a contraction from $k$ nodes to $k-1$ nodes is exactly $p_k$.  Then $Q(n)$ gives a lower bound on the probability that $S^*$ will survive.  

We now give a recursive equation for $Q(n)$, using the second perspective on the FPZ algorithm (Algorithm \ref{alg:fpz2}). 
We have $Q(2) = 1$, and we set $Q(n)$ as follows:
$$Q(n) =p_n^2Q(n-1) + (1-p_n)\left(1 - \left(1 - Q(n)\right)\left(1 - p_n \cdot Q(n-1)\right)\right).$$
This equation follows since with probability $p_n$ we perform a random contraction and a recursive call on a graph of size $n-1$, and the probability the cut survives is $p_nQ(n-1)$. With probability $(1-p_n)$ we perform two calls, one on the original graph of size $n$ and one on a graph of size $n-1$ with a single edge contracted  and the probability that the cut does not survive in either call is $(1-Q(n))(1-p_nQ(n-1))$, so the probability it survives at least one of the two calls is $1-(1-Q(n))(1-p_nQ(n-1))$.

Rewriting, we have that
\begin{align*}
	Q(n) & =p_n^2Q(n-1) + (1-p_n)\left(1 - \left(1 - Q(n)\right)\left(1 - p_n \cdot Q(n-1)\right)\right) \\
	&  = p_n^2Q(n-1) + (1-p_n)\left(Q(n) + p_nQ(n-1) - p_n Q(n-1)Q(n)\right)  \\
	& =  p_n^2Q(n-1) + p_nQ(n-1) - p_n^2Q(n-1) + (1 - p_n)Q(n) - p_n (1-p_n) Q(n-1)Q(n)  \\
	& = p_nQ(n-1) + \left((1-p_n) - p_n(1-p_n)Q(n-1)\right)Q(n).
\end{align*}
Then we have that
$$p_nQ(n) = p_nQ(n-1) - p_n(1-p_n)Q(n-1)Q(n),$$
or 
$$Q(n) = Q(n-1) - (1-p_n)Q(n-1)Q(n),$$
or, dividing by $Q(n)Q(n-1)$, and recalling that $1-p_n = \frac{2}{n}$,
$$\frac{1}{Q(n-1)} = \frac{1}{Q(n)} - \frac{2}{n},$$ so that
$$
\frac{1}{Q(n)}  = \frac{1}{Q(n-1)} + \frac{2}{n}  = \frac{1}{Q(n-2)} + \frac{2}{n-1} + \frac{2}{n} = \cdots = 2H_n - 3 + \frac{1}{Q(2)}.$$
 Using that $Q(2) = 1$, we get that $$\frac{1}{Q(n)} = 2H_n-2,$$ or $Q(n) = \frac{1}{2H_n - 2}.$
\end{proof}

\begin{theorem} \label{thm:runtime}
The expected running time of the algorithm is $O(n^2 \log n)$.
\end{theorem}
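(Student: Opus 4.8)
The plan is to bound the expected total work by combining two ingredients: (i) the work done at a single node of the recursion tree operating on an $m$-vertex graph is $O(m)$ in expectation, given a suitable implementation, and (ii) the expected number of recursion-tree nodes that operate on an $m$-vertex graph is exactly $\binom{n}{2}\big/\binom{m}{2}$. Multiplying these and summing over $m$ then yields the bound.

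For (i), represent each graph by a weighted adjacency matrix, together with the array of weighted degrees $u(\delta(i))$ and the total weight $U$. With these auxiliary arrays, sampling an edge with probability proportional to its capacity takes $O(m)$ time (scan the degrees to pick the first endpoint, then scan that row to pick the second), and a single edge contraction --- merging two rows and columns, deleting self-loops, and updating the degree array and $U$ --- also takes $O(m)$ time. Rather than copying the matrix once per child, we process the $k$ children of a node serially in depth-first fashion: contract an edge, recurse, then undo the contraction (restoring the two affected rows, which we saved, in $O(m)$ time) before handling the next child, using a single mutable graph and an undo stack (this also bounds the total extra space by $O(n^2)$). Each returned cut is carried up the recursion as its numerical capacity in the original graph --- the value computed at a $2$-vertex leaf is exactly this capacity --- so the comparison $j \gets \argmin$ costs $O(1)$ per child. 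Hence the work at a size-$m$ node, excluding the recursive calls, is $O(m)$ per child; since the number of children $k$ is geometric with mean $1/p_m = m/(m-2) \le 3$ for $m \ge 3$, this work is $O(m)$ in expectation.

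For (ii), note first that the recursion tree has depth at most $n-2$, since each recursive call decreases the vertex count by exactly one, so a node at depth $d$ operates on a graph of $m = n-d$ vertices. The expected number of children of a size-$m$ node is $1/p_m = m/(m-2)$, and since this is independent of how many nodes appear at a given depth, linearity of expectation gives that the expected number of depth-$d$ nodes is $\prod_{j=0}^{d-1} \frac{n-j}{n-j-2}$. This product telescopes to $\frac{n(n-1)}{(n-d)(n-d-1)} = \binom{n}{2}\big/\binom{n-d}{2}$; equivalently, the expected number of recursion-tree nodes operating on $m$-vertex graphs is $\binom{n}{2}\big/\binom{m}{2}$ (a sanity check: the expected number of leaves is $\binom{n}{2}$). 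Combining (i) and (ii), the expected running time is
\[ \sum_{m=2}^{n} \frac{\binom{n}{2}}{\binom{m}{2}}\cdot O(m) \;=\; O\!\left(n^2 \sum_{m=2}^{n} \frac{1}{m-1}\right) \;=\; O(n^2 H_{n-1}) \;=\; O(n^2 \log n). \]
Alternatively, one can bypass the explicit node count and directly solve the recurrence $T(n) = \frac{n}{n-2}\big(cn + T(n-1)\big)$ with $T(2) = O(1)$: dividing through by $\binom{n}{2}$ turns it into $b_n = b_{n-1} + \frac{2cn}{(n-1)(n-2)}$, whose solution is $b_n = O(\log n)$, again giving $T(n) = O(n^2\log n)$.

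The combinatorial heart of the argument --- the telescoping product in (ii) --- is clean and short. The part that demands care is (i): choosing data structures so that sampling an edge, performing a contraction, and undoing a contraction each cost only $O(m)$ rather than $O(m^2)$, and observing that the small expected branching factor $1/p_m = O(1)$ keeps the total per-node cost at $O(m)$ even though a node may occasionally have many children.
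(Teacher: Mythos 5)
Your proof is correct, and your primary argument takes a genuinely different route from the paper's. The paper works entirely with the recurrence coming from the second perspective (Algorithm \ref{alg:fpz2}): $T(n) = T(n-1) + (1-p_n)T(n) + O(n)$, hence $p_n T(n) = T(n-1) + O(n)$, and then normalizes by $n(n-1)$ so that $R(n) = T(n)/(n(n-1))$ satisfies $R(n) = R(n-1) + O(1/n)$, giving $R(n) = O(H_n)$. Your main argument instead does global recursion-tree accounting: the expected branching factor at an $m$-vertex node is $1/p_m = m/(m-2)$, so the expected number of nodes handling $m$-vertex graphs telescopes to $\binom{n}{2}/\binom{m}{2}$, and summing $O(m)$ work per node gives $O(n^2 H_{n-1})$. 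The two are really the unrolled and recursive forms of the same harmonic-sum calculation (your ``alternative'' recurrence $T(n) = \frac{n}{n-2}(cn + T(n-1))$, normalized by $\binom{n}{2}$, is literally the paper's proof), but your version buys some extra insight: it makes explicit that the expected number of leaves is $\binom{n}{2}$, which nicely mirrors the $\Omega(1/\log n)$ success probability against the $1/\binom{n}{2}$ single-run bound. One point of care in your accounting, which you gesture at but should state cleanly, is that summing ``(expected node count at a size) times (expected work per node)'' uses the independence of a node's own randomness (its draw of $k$ and its contractions) from the ancestor randomness that determines whether the node exists; this is the same Wald-type step you invoke for the node count itself. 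Finally, your paragraph on data structures (adjacency-matrix representation, $O(m)$ sampling and contraction, processing children serially with an undo stack so the parent graph need not be copied) supplies detail the paper delegates to Karger--Stein's $O(n)$-per-contraction claim; it is a legitimate concern, since naively copying the graph at every child would cost $O(m^2)$ per child and degrade the bound, and either your undo-stack solution or copying only in the probability-$(1-p_n)$ branch of Algorithm \ref{alg:fpz2} (expected extra cost $O(n)$ per level) resolves it.
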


\begin{proof}
Let $T(n)$ be the expected running time of the algorithm on an $n$ node graph. Karger and Stein have shown that the time to perform a single random contraction is $O(n)$.  Then using the second perspective on the algorithm (Algorithm \ref{alg:fpz2}), we have that
$$T(n) = T(n-1) + (1-p_n)T(n) + O(n),$$
or
$$p_n T(n) = T(n-1) + O(n),$$
or, since $p_n = \frac{n-2}{n}$, $$\frac{1}{n}T(n) = \frac{1}{n-2}T(n-1) + O\left(\frac{n}{n-2}\right),$$ or
$$
\frac{1}{n(n-1)} T(n) =  \frac{1}{(n-1)(n-2)}T(n-1) + O\left(\frac{n}{(n-1)(n-2)}\right).$$
Substituting $R(n) = \frac{1}{n(n-1)} T(n)$, we get
$$R(n) = R(n-1) + O\left(\frac{1}{n}\right),$$ so that $R(n) = O(H_n)$, and thus $T(n)  = O(n^2 \log n)$.
\end{proof}

\section{Limits on Our Approach}

In this section, we consider whether a similar version of the recursive contraction algorithm could achieve either a higher probability of success or an asymptotically faster running time, and we show, under some assumptions, that the answer is no.

We assume that in each step the algorithm draws a positive integer $k$ from a distribution with probability $a_k$ (so that $\sum_{k=1}^\infty a_k = 1$), and $k$ times it picks a single edge randomly, contracts it, and calls itself on the contracted graph.  
We assume that the expected number of calls in which a given minimum cut survives is one, as we justified earlier.

We let $T$ be the time that the algorithm takes to contract the randomly chosen edge and call itself recursively on the smaller graph, so that the expected running time of the algorithm is $\hat T \equiv \sum_{k=1}^\infty a_k \cdot k \cdot T$.  Furthermore, let $f$ be the probability that the algorithm fails to return the given minimum cut in the process of contracting a single random edge and calling itself recursively on the smaller graph.  Thus the probability that the algorithm fails to return the given minimum cut when it calls itself recursively $k$ times is $f^k$, and the probability that the algorithm fails to return the given minimum cut overall is $\hat F \equiv \sum_{k=1}^\infty a_k \cdot f^k$.  

If we consider the points $(kT,f^k)$ for $k =1,\ldots,\infty$, the lower envelope of these points is a piecewise linear curve joining the points. Then  $(\hat T, \hat F)$ can be expressed as a convex combination of the points on the curve, and must lie on or above the lower envelope. Thus there exists some $k$ and some $0 \leq \lambda \leq 1$ such that 
$\hat T = \lambda kT + (1-\lambda)(k+1)T$ and such that $\hat F \geq \lambda f^k + (1-\lambda)f^{k+1}.$  In other words, we can obtain the same expected running time and no greater of a failure probability if we change the probability distribution to perform $k$ recursive calls with probability $\lambda$ and $k+1$ recursive calls with probability $1-\lambda$.


We now use the fact that we want the given minimum cut to survive exactly one of the recursive calls in order to determine $\lambda$ and $k$.  The given cut survives the random contraction of one edge with probability $p_n$, so the expected number of calls in which it survives when we make $k$ calls with probability $\lambda$ and $k+1$ calls with probability $1 - \lambda$ is 
$$\lambda \cdot k \cdot p_n + (1-\lambda)\cdot(k+1) \cdot p_n.$$  Setting this quantity equal to one, we get that
$$\lambda \cdot k \cdot p_n + (1-\lambda)\cdot (k+1) \cdot p_n=1,$$ or 
$$1 + \lambda \cdot p_n = (k+1) \cdot p_n,$$
or $$\lambda = (k+1) - \frac{1}{p_n} = k+1 - \frac{n}{n-2}.$$  Note then that for $n \geq 5$ that $0 \leq \lambda \leq 1$ only if $k=1$.  Then for $k=1$,
$$\lambda = 2 - \frac{1}{p_n} = 2 - \frac{n}{n-2} = (n-4)/(n-2).$$
Thus we have shown that the optimal choice is that the algorithm makes one recursive call with probability $(n-4)/(n-2)$ and two recursive calls with probability $1 - \frac{n-4}{n-2} = 2/(n-2)$.

We can now bound the probability that the algorithm successfully returns a given minimum cut.  As in the previous section, let $Q(n)$ denote the probability that $S^*$ is returned by the algorithm, given that it has survived contractions to $n$ nodes, and given that the probability $S^*$ survives a contraction from $k$ nodes to $k-1$ nodes is exactly $p_k$.

\begin{theorem}
$Q(n) = \Theta(1/\log n)$.
\end{theorem}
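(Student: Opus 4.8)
The plan is to read a recurrence for $Q(n)$ off the structure of the optimized algorithm, pass to the reciprocal $1/Q(n)$, and show that this reciprocal increases by $\Theta(1/n)$ at each step, so that $1/Q(n) = \Theta(\log n) = \Theta(H_n)$ and hence $Q(n) = \Theta(1/\log n)$.

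\emph{Setting up the recurrence.} Condition on $S^*$ being intact in the $n$-node graph, with the probability of surviving a contraction from $k$ to $k-1$ nodes taken to be exactly $p_k$. Each ``attempt'' (contract one random edge, then recurse on the resulting $(n-1)$-node graph) returns $S^*$ with probability exactly $p_n Q(n-1)$, and distinct attempts are independent. Since the optimized algorithm makes one attempt with probability $(n-4)/(n-2)$ and two attempts with probability $2/(n-2)$,
$$Q(n) = \frac{n-4}{n-2}\,p_n Q(n-1) + \frac{2}{n-2}\Bigl(1 - \bigl(1 - p_n Q(n-1)\bigr)^2\Bigr).$$
Writing $x = p_n Q(n-1) = \tfrac{n-2}{n}Q(n-1)$, the right-hand side simplifies to $\tfrac{x(n-2x)}{n-2}$, and substituting back for $x$ this becomes
$$Q(n) = Q(n-1) - \frac{2(n-2)}{n^2}\,Q(n-1)^2.$$

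\emph{Passing to the reciprocal.} Put $\beta_n = \frac{2(n-2)}{n^2} = \frac{2}{n} - \frac{4}{n^2}$, so $\beta_n = \Theta(1/n)$ and $\beta_n \le \tfrac12$ for all relevant $n$, and let $y_n = 1/Q(n)$. From the recurrence $0 < Q(n) < Q(n-1) \le 1$, so $y_n \ge 1$ and $(y_n)$ is increasing, and
$$y_n = \frac{y_{n-1}}{1 - \beta_n/y_{n-1}} = y_{n-1} + \frac{\beta_n y_{n-1}}{y_{n-1} - \beta_n}.$$
Because $y_{n-1} \ge 1$ and $\beta_n \le \tfrac12$, we have $\tfrac12 y_{n-1} \le y_{n-1} - \beta_n < y_{n-1}$, hence $1 \le \frac{y_{n-1}}{y_{n-1}-\beta_n} \le 2$, and therefore
$$\beta_n \ \le\ y_n - y_{n-1}\ \le\ 2\beta_n .$$
Telescoping from a fixed small constant $n_0$ gives $y_{n_0} + \sum_{k=n_0+1}^{n}\beta_k \le y_n \le y_{n_0} + \sum_{k=n_0+1}^{n} 2\beta_k$; since $\sum_{k=n_0+1}^{n}\beta_k = 2H_n - O(1) = \Theta(\log n)$ and likewise $\sum_{k=n_0+1}^n 2\beta_k = \Theta(\log n)$, while $y_{n_0}$ is a constant, we get $y_n = \Theta(\log n)$ and so $Q(n) = 1/y_n = \Theta(1/\log n)$.

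\emph{Where the care is needed.} The distribution $\bigl((n-4)/(n-2),\,2/(n-2)\bigr)$ is only meaningful for $n \ge 5$ (it degenerates at $n=4$ and is negative below), so I would fix $n_0 \ge 5$, note that $Q(n_0)$ is a positive constant at most $1$, and run the telescoping from there; the finitely many boundary values shift $y_n$ by only an additive $O(1)$ and do not affect the $\Theta(\log n)$ conclusion. I do not anticipate a genuine obstacle here: this is exactly the reciprocal substitution used in the proof of Theorem~\ref{thm1}, except that the extra factor $Q(n-1)$ carried along in the recurrence turns the clean identity of that proof into the two-sided estimate above — and since that factor always lies in $(0,1]$, it is controlled for free.
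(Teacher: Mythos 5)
Your proposal is correct and follows essentially the same route as the paper: derive the recurrence $Q(n) = Q(n-1) - \frac{2(n-2)}{n^2}Q(n-1)^2$, pass to the reciprocal, show each increment is between $\beta_n$ and $2\beta_n$ with $\beta_n = \Theta(1/n)$, and telescope to get $1/Q(n) = \Theta(H_n)$. The only (cosmetic) difference is that you bound the ratio factor $y_{n-1}/(y_{n-1}-\beta_n) \le 2$ directly from $Q(n-1) \le 1$ and $\beta_n \le \tfrac12$, whereas the paper proves the equivalent bound $S(n) \le 2S(n-1)$ via a short contradiction argument; both rest on the same fact that $Q$ is a probability.
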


\begin{proof}
We have that
\begin{eqnarray*}
 Q(n) & = & \frac{n-4}{n-2} \cdot p_n \cdot Q(n-1) + \frac{2}{n-2} \cdot (1 - (1 - p_nQ(n-1))^2)\\
& = & \frac{n-4}{n-2} \cdot p_n \cdot Q(n-1) + \frac{2}{n-2}(2p_nQ(n-1)-p_n^2Q(n-1)^2)\\
& = & \frac{n}{n-2} \cdot p_n \cdot Q(n-1) - \frac{2}{n-2}p_n^2Q(n-1)^2\\
& = & \frac{n}{n-2} \cdot \frac{n-2}{n}Q(n-1) - \frac{2}{n-2}\frac{n-2}{n}\frac{n-2}{n} Q(n-1)^2\\
& = & Q(n-1) - \frac{2(n-2)}{n^2}Q(n-1)^2.
\end{eqnarray*}
We substitute $S(n) = 1/Q(n)$, so that we have
$$\frac{1}{S(n)} = \frac{1}{S(n-1)} - \frac{2(n-2)}{n^2}\frac{1}{S(n-1)^2}.$$
Multiplying by $S(n)S(n-1)$ and rearranging, we get 
$$S(n) - S(n-1) = \frac{2(n-2)}{n^2}\frac{S(n)}{S(n-1)}.$$
Since $Q(n) \leq Q(n-1)$, $S(n) \geq S(n-1)$, so that 
$$S(n) - S(n-1) \geq \frac{2(n-2)}{n^2} = \frac{2}{n} - \frac{4}{n^2} \geq \frac{1}{n}$$ for $n \geq 4$, and we have that
$$S(n) \geq H_n,$$ and $Q(n) \leq \frac{1}{H_n}$.

To get a lower bound on $Q(n)$, we claim that $Q(n) \geq \frac{1}{2}Q(n-1)$; we prove this claim momentarily. Then $S(n) \leq 2S(n-1)$.  Plugging this into the equation above, we get that 
$$
S(n) - S(n-1)  = \frac{2(n-2)}{n^2}\frac{S(n)}{S(n-1)} \leq \frac{4(n-2)}{n^2}  = \frac{4}{n}-\frac{8}{n^2} \leq \frac{4}{n}.$$
Then $S(n) \leq 4H_n$, so that $Q(n) \geq 1/4H_n$.  We conclude that $Q(n) = \Theta(1/H_n) = \Theta(1/\log n)$.

To prove the claim, suppose $Q(n) < \frac{1}{2} Q(n-1)$.  Then we have that
$$Q(n-1) - \frac{2(n-2)}{n^2}Q(n-1)^2 < \frac{1}{2} Q(n-1),$$ which implies that
$$Q(n-1) < \frac{4(n-2)}{n^2}Q(n-1)^2,$$ or $$Q(n-1) > \frac{n^2}{4(n-2)},$$ which cannot hold since $Q(n) \leq 1$ for all $n \geq 2$.
\end{proof}

\begin{theorem}
	The expected running time of the algorithm is $O(n^2 \log n)$.
\end{theorem}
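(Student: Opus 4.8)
The plan is to set up and solve a recurrence for the expected running time $T(n)$, in direct parallel to the proof of Theorem~\ref{thm:runtime}. Recall from Karger and Stein that a single random contraction takes $O(n)$ time, and that in this variant each recursive call is preceded by contracting one edge and is then made on the resulting $(n-1)$-node graph. Hence, conditioned on the algorithm making $k$ calls, the expected cost at this level of recursion is $k\bigl(T(n-1) + O(n)\bigr)$.

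Using the distribution established above in this section---one recursive call with probability $(n-4)/(n-2)$ and two recursive calls with probability $2/(n-2)$---the expected number of calls is $\frac{n-4}{n-2}\cdot 1 + \frac{2}{n-2}\cdot 2 = \frac{n}{n-2}$, so that
$$T(n) = \frac{n}{n-2}\bigl(T(n-1) + O(n)\bigr) = \frac{n}{n-2}T(n-1) + O(n),$$
the final $O(n)$ absorbing the factor $\frac{n}{n-2}$, which is $O(1)$ for $n$ bounded away from $2$ (in particular for $n \geq 5$, where the distribution is well-defined). This is exactly the recurrence solved in Theorem~\ref{thm:runtime}.

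From here I would follow the identical sequence of manipulations: rewrite as $\frac{n-2}{n}T(n) = T(n-1) + O(n)$, divide through by $(n-1)(n-2)$ to obtain
$$\frac{1}{n(n-1)}T(n) = \frac{1}{(n-1)(n-2)}T(n-1) + O\left(\frac{1}{n}\right),$$
substitute $R(n) = \frac{1}{n(n-1)}T(n)$ to get $R(n) = R(n-1) + O(1/n)$, conclude $R(n) = O(H_n) = O(\log n)$, and therefore $T(n) = O(n^2\log n)$, with base case $T(2) = O(1)$.

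There is essentially no obstacle here: the only point worth checking is that the two-call branch, which costs twice a contraction plus twice a recursive call, combines with the one-call branch so that the effective branching factor collapses to $\frac{n}{n-2}$---precisely the same value that appears in the FPZ running-time analysis---after which the solution is verbatim. In fact this is exactly the reason the earlier sections constrained the expected number of surviving children to be one: it forces the total expected branching factor to be $1/p_n = \frac{n}{n-2}$, which is what drives both the $\Theta(1/\log n)$ success probability and the $O(n^2\log n)$ running time.
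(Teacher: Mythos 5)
Your proof is correct and follows essentially the same route as the paper: you derive the recurrence $T(n) = \frac{n}{n-2}T(n-1) + O(n)$ from the one-call/two-call distribution (the paper writes it as $\frac{n-4}{n-2}T(n-1) + \frac{2}{n-2}\cdot 2T(n-1) + O(n) = \frac{1}{p_n}T(n-1)+O(n)$) and then solve it exactly as in Theorem~\ref{thm:runtime}. The only difference is that you write out the substitution $R(n)=\frac{1}{n(n-1)}T(n)$ explicitly, where the paper simply says the analysis is essentially the same as before.
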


\begin{proof}
Let $T(n)$ be the expected running time of the algorithm on an $n$ node graph. Then
\begin{align*}
T(n) & = \frac{n-4}{n-2} T(n-1) + \frac{2}{n-2} \cdot 2T(n-1) + O(n)\\
& = \frac{1}{p_n} T(n-1) + O(n),
\end{align*}
and the analysis is essentially the same as that in Theorem \ref{thm:runtime}.
\end{proof}

\bibstyle{abbrv}
\bibliography{FPZ}

\end{document}